\newtheorem{thm}{Theorem}
\newtheorem*{Penthm}{Penrose Singularity Theorem}
\newtheorem{lem}[thm]{Lemma}
\theoremstyle{definition}
\newcommand{\Ric}{\operatorname{Ric}}
\newcommand{\Riem}{\operatorname{Riem}}
\newcommand{\trap}{\mathcal{T}}
\title{Radial Gravitational Collapse Causes Timelike Incompleteness}
\author{Leonardo García-Heveling \footnote{Radboud University Nijmegen, the Netherlands (former) and Universit\"at Hamburg, Germany (present). Email: \texttt{leonardo.garcia@uni-hamburg.de}}}
\begin{document}

\maketitle

\begin{abstract}
We show that a globally hyperbolic spacetime containing a trapped surface and satisfying the strong energy condition and a condition on certain radial tidal forces must be timelike geodesically incomplete. This constitutes a ``timelike" version of Penrose's celebrated singularity theorem. Recall that the latter concludes that certain spacetimes are null incomplete, providing the first theoretical evidence that black holes actually exist in our Universe. By concluding timelike instead of null incompleteness, we obtain, at the expense of stronger assumptions, a clearer physical interpretation and the existence of an event horizon.
\end{abstract}

\section{Introduction}

In his seminal 1964 article \cite{PenSing}, Roger Penrose showed that gravitational collapse in General Relativity leads to the formation of spacetime singularities. At the time, some singular solutions to Einstein's Equations were already known, such as the Schwarzschild solution. It was believed, however, that  such singularities were mathematical pathologies arising due to the high degree of symmetry and the lack of matter, and that physically realistic solutions would be singularity free. Penrose forever changed this point of view through his famous singularity theorem.

\begin{Penthm}
    Let $(M,g)$ be a spacetime containing a non-compact Cauchy surface. Assume that the null convergence condition is satisfied (meaning $\Ric(X,X) \geq 0$ for every null vector $X$) and that $(M,g)$ contains a trapped surface $\trap$. Then $(M,g)$ is null geodesically incomplete.
\end{Penthm}

The key concept of \emph{trapped surface} was introduced for the first time through this theorem. The original definition is that $\trap$ is a codimension $2$ spacelike submanifold with negative ingoing and outgoing null expansion. Physically, it corresponds to the boundary of a spatial region from which even light cannot escape (hence the name). The incomplete null geodesic predicted by the theorem corresponds to the trajectory of a light ray which ends abruptly, the interpretation being that it must have met a spacetime singularity. Since the remaining assumptions are rather mild, it provided the first compelling evidence that the formation of singularities is in fact a crucial feature of General Relativity \cite{LanCC}.

The nature of the singularity predicted by Penrose's theorem, however, remains mysterious to this day. While it is widely believed to be at the center of a black hole, the theorem itself gives no information about features usually associated to such objects: Is it shielded by an event horizon? Does the curvature blow up? Penrose himself was aware of the problem and conjectured some of these properties, but these so-called cosmic censorship conjectures remain open to this day \cite{LanCC}.

Another problem is that Penrose's theorem only predicts null incompleteness. It thus remained open whether massive objects or even human observers, which follow timelike curves, can fall into the singularity. Moreover, the affine parameter of a null geodesic (i.e.\ the quantity that characterizes incomplete null geodesics) does not correspond to a physical observable. The affine parameter of a \emph{timelike} geodesic, on the other hand, is the proper time measured by an observer.

The issue of timelike vs null incompleteness had already been identified early on by Geroch \cite{GerSing} and Beem \cite{Beem}, who gave some negative results, showing that null incompleteness does not always imply timelike incompleteness. The counterexamples given, however, are physically rather artificial. In this paper, we reopen the case by proving a positive result.

\begin{thm} \label{thm:main}
    Let $(M,g)$ be an $n+1$-dimensional spacetime. Assume that
    \begin{enumerate}
     \item $(M,g)$ is globally hyperbolic.
     \item $\Ric(X,X) \geq 0$ for every timelike vector $X$.
     \item $(M,g)$ contains a trapped surface $\trap$ with (necessarily past-pointing timelike) mean curvature vector denoted by $H$ and $H_0 := \min \sqrt{\vert g(H,H) \vert}$.
     \item There exists a constant $0<k< H_0$ such that every unit speed timelike geodesic $\sigma$ normal to $\trap$ of length greater than $\frac{1}{k}$ satisfies
     \begin{equation} \label{eq:initialcond}
      \int_0^{\frac{1}{k}} \left(1-k^2u^2 \right) \Riem(\dot\sigma,U,\dot\sigma,U) du \leq 0,
     \end{equation}
     where $U$ is the parallel transport along $\sigma$ of the unique (up to sign) spacelike unit vector orthogonal to $\dot\sigma(0)$ and $\trap$.
    \end{enumerate}
    Then every causal curve starting on $\trap$ has length at most $\ell := \frac{1}{k}+\frac{n}{H_0-k}$. In particular, $(M,g)$ is timelike geodesically incomplete.
\end{thm}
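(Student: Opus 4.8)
The plan is to bound the Lorentzian time separation from $\trap$ and to deduce both the length bound and incompleteness from standard causality theory. Since a trapped surface is compact, global hyperbolicity guarantees that for every $q\in I^+(\trap)$ the distance $d(\trap,q)$ is realized by a future-directed timelike geodesic that meets $\trap$ orthogonally and has no focal point to $\trap$ in its interior. Hence it suffices to prove the focal-point statement: \emph{every future-directed unit-speed timelike geodesic $\sigma$ normal to $\trap$ acquires a $\trap$-focal point at parameter at most $\ell=\frac1k+\frac{n}{H_0-k}$}. Granting this, no $\trap$-orthogonal geodesic maximizes beyond $\ell$, so $d(\trap,q)\le\ell$ for all $q$; as length is maximized by geodesics, this bounds the length of every future-directed causal curve issuing from $\trap$ by $\ell$, and a complete timelike geodesic from $\trap$ would then have both infinite and $\le\ell$ length --- a contradiction yielding timelike incompleteness.

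For the focal-point statement I would work with the index form of $\sigma$ on vector fields orthogonal to $\dot\sigma$; since such fields are spacelike, this form is of Riemannian type and the usual focal-point calculus applies. Fix $p=\sigma(0)\in\trap$ and parallel-transport along $\sigma$ an orthonormal frame $E_1,\dots,E_{n-1}$ of $T_p\trap$ together with the spacelike normal $U$ of hypothesis (iv), obtaining a parallel orthonormal frame of $\dot\sigma^\perp$. Two inputs feed the estimate. First, the mean-curvature hypothesis gives the initial tangential expansion $\theta^{\mathrm{tan}}(0)=-g(H,N)$, and since $N$ is future- and $H$ past-pointing timelike, the reverse Cauchy--Schwarz inequality yields $-g(H,N)\le -H_0$ \emph{for every} admissible normal direction $N$. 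Second, condition (ii) drives the Raychaudhuri inequality $\theta'\le-\theta^2/n$ once the full expansion is available.

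The genuinely new difficulty --- and the reason for hypothesis (iv) --- is that the normal bundle of $\trap$ is two-dimensional, unlike the codimension-one setting of the Hawking-type arguments. Consequently the spacelike-normal direction $U$ contributes a Jacobi field whose tidal term $\Riem(\dot\sigma,U,\dot\sigma,U)$ is \emph{not} controlled by $\Ric\ge0$, and whose expansion blows up ($+\infty$) at $\tau=0$; feeding the full expansion into Raychaudhuri from $\tau=0$ is therefore useless, since the comparison solution $n/\tau$ never focuses. The role of (iv), which I rewrite as $\int_0^{1/k}(1-k^2u^2)\,g(R(U,\dot\sigma)\dot\sigma,U)\,du\ge 0$, is exactly to neutralize this direction on the initial segment $[0,\tfrac1k]$: the weight $(1-k^2u^2)$ is the one matched to the critical comparison profile there, and a Sturm/index-form comparison then shows that on $[0,\tfrac1k]$ the $U$-direction cannot spoil the tangential focusing by more than $k$, so that either a focal point already forms or the full expansion satisfies $\theta(\tfrac1k)\le-(H_0-k)$. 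On the remaining interval $[\tfrac1k,\ell]$ one runs the scalar Riccati comparison $\theta'\le-\theta^2/n$ with this initial value, forcing $\theta\to-\infty$ --- a focal point --- within additional proper time $\tfrac{n}{H_0-k}$, for a total of $\ell=\tfrac1k+\tfrac{n}{H_0-k}$.

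I expect the main obstacle to be precisely this splicing. The tangential and $U$-directions are coupled by the curvature operator, so one cannot simply add two independent Riccati equations; and because the $U$-expansion is singular at $\tau=0$, the passage from the \emph{averaged} hypothesis (iv) to the pointwise bound $\theta(\tfrac1k)\le-(H_0-k)$ must be carried out without losing the mean-curvature head start $-H_0$, and with the signs arranged so that $\Ric\ge0$ is used with the correct orientation. Making this comparison rigorous --- identifying the extremal profile attached to the weight $(1-k^2u^2)$ and combining the $U$-direction and tangential contributions into a single monotone quantity --- is the technical heart of the argument; the causality-theoretic reduction and the final Raychaudhuri step are standard.
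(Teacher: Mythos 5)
Your global reduction is exactly the paper's: continuity of the time separation on the compact surface $\trap$ under global hyperbolicity, a maximizing geodesic meeting $\trap$ orthogonally, and a focal point before $\ell$ contradicting maximality, whence the length bound and incompleteness. You have also correctly reverse-engineered all the numerology: the cost $k$ of activating the radial direction, the head start $g(\dot\sigma(0),H)\geq H_0$ from the reverse Cauchy--Schwarz inequality, the focusing time $\tfrac{n}{H_0-k}$, and the weight $1-k^2u^2$ on $[0,\tfrac{1}{k}]$. But the step you yourself flag as the ``technical heart'' --- passing from the averaged hypothesis (iv) to the pointwise dichotomy ``focal point in $(0,\tfrac{1}{k}]$ or $\theta(\tfrac{1}{k})\leq -(H_0-k)$'', and then splicing on a scalar Raychaudhuri comparison --- is left entirely open, and it is a genuine gap rather than a routine verification: the radial and tangential blocks of the matrix Riccati equation are coupled by the curvature, $\Ric\geq 0$ controls only the trace, the radial expansion diverges like $1/u$ at $u=0$, and an integral inequality like (iv) does not by itself bound any Riccati solution at the single time $u=\tfrac{1}{k}$. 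As written, the proposal is a correct strategic outline with its central estimate missing.

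The paper's proof shows that the splicing problem you anticipate can be bypassed altogether: it never forms an expansion $\theta$ and never needs a pointwise bound at $u=\tfrac{1}{k}$. It works directly with Synge's second variation on a parallel orthonormal frame $E_1=U,E_2,\dots,E_n$ and explicit piecewise-linear test fields $V_1=\varphi E_1$, $V_j=\psi E_j$, where $\varphi(u)=ku$ and $\psi(u)=1$ on $[0,\tfrac{1}{k}]$, both decaying linearly to $0$ at $\ell$ with slope $\tfrac{H_0-k}{n}$. Summing $L''_{V_i}$ over the frame converts the coupled curvature terms into the trace $\Ric(\dot\sigma,\dot\sigma)\geq 0$; the mismatch between $\psi^2$ and $\varphi^2$ on $[0,\tfrac{1}{k}]$ produces exactly $\int_0^{1/k}\left(1-k^2u^2\right)\Riem(E_1,\dot\sigma,E_1,\dot\sigma)\,du$, so hypothesis (iv) enters verbatim as an integral; and the boundary term yields $g(\dot\sigma(0),H)\geq H_0$. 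The total is then positive, so some $L''_{V_i}\geq 0$ and a focal point exists in $(0,\ell]$, the only technicality being a $C^1$-smoothing of $\varphi,\psi$, harmless because the final inequality is strict. If you wanted to salvage your two-stage route, the missing tool is the focal index lemma (absent focal points, Jacobi fields maximize the second variation among fields with the same boundary data), which would convert your index computation on $[0,\tfrac{1}{k}]$ into the desired bound on the expansion at $\tfrac{1}{k}$; but once that index computation is in hand, running it on all of $[0,\ell]$ as the paper does finishes the proof in one stroke, and the Riccati stage becomes superfluous.
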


Note that our theorem predicts that \emph{every} observer starting on $\trap$ will fall into the singularity after a proper time of at most $\ell$. From this, we can deduce the existence of an event horizon, because it means that an observer reaching a trapped surface can no longer escape to infinity. We return to this important point in the conclusions section. For now, let us comment on each of the assumptions:
\begin{enumerate}
 \item Global hyperbolicity is equivalent to the existence of a Cauchy surface \cite{GerDD,BeSa1}, which can be compact or non-compact (unlike in Penrose's theorem).
 \item This is the \emph{timelike convergence condition}, which is stronger than the null convergence condition. Through the Einstein equations it can be translated into a bound on the energy-momentum tensor known as the \emph{strong energy condition}, which is satisfied by many forms of matter \cite[pp.~95-96]{HaEl}.
 \item A trapped surface $\trap$ is a compact, orientable, spacelike codimension $2$ submanifold with everywhere past-pointing timelike mean curvature vector $H$. This definition is equivalent to the usual one in terms of negative inward and outward null expansions \cite[p.~435]{ONeill}. Note that by compactness $H_0>0$.
 \item The sectional curvatures $\Riem(\dot\sigma,U,\dot\sigma,U)$ have the physical interpretation of tidal forces. In our case, $U$ should be interpreted as the radial direction pointing inward (or outward) of the trapped surface. We are thus requiring that the radial tidal forces are repulsive, at least for some proper time $\frac{1}{k}$ in a weighted average sense (where the weight decreases toward the future).
\end{enumerate}

Assumption (iv) is justified by the physical intuition that the gravitational attraction grows as one approaches the source, causing a test object freely falling towards a star to be stretched \cite[§32.6]{MTW}. In Schwarzschild spacetime, (iv) is satisfied, and in fact the radial tidal forces even blow up to $-\infty$. In Kerr spacetime, however, (iv) can be violated for geodesics staying close to the axis of rotation of the black hole \cite{ONeillKerr}. Thus further investigation to allow the case of rotating black holes is needed (see also the discussion section at the end). From a conceptual point of view, assumption (iv) also has the disadvantage that it is an assumption on the geometry of our particular spacetime. In comparison, the energy condition (ii) depends only on the matter model considered, but not on the specific solution. By framing (iv) as an integral assumption (rather than pointwise), we do, luckily, avoid the need for extreme fine tuning.

We prove the theorem in the next section, and conclude the paper with a discussion about event horizons and possibilities to weaken our assumptions.

\section{Proof of main theorem}

We use techniques from the book of O'Neill \cite{ONeill}, which are used there to prove the Hawking singularity theorem. The latter is similar to the Penrose singularity theorem, but its assumptions and conclusions are usually interpreted from a cosmological point of view, instead of in the context of gravitational collapse.

We start with the global causality theoretic argument of our proof. To that end, recall that the Lorentzian length $L_g(\gamma)$ of a causal curve $\gamma \colon [0,b] \to M$ is given by
\begin{equation}
 L_g(\gamma) := \int_0^b g(\dot\gamma,\dot\gamma) du,
\end{equation}
and equals $b$ when $\gamma$ is parametrized with unit speed.

\begin{lem} \label{lem:global}
 For every point $q$ in the timelike future of $\trap$, there exists a maximizing timelike geodesic $\sigma$ from $\trap$ to $q$. Moreover, $\sigma$ is normal to $\trap$.
\end{lem}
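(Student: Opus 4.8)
The plan is to establish existence of the maximizing geodesic by a limit-curve argument that exploits global hyperbolicity, and then to deduce normality from the first variation of arc length. First I would introduce the time separation (Lorentzian distance) from $\trap$ to $q$,
\[
 \tau(\trap,q) := \sup_{\gamma} L_g(\gamma),
\]
where the supremum ranges over all future-directed causal curves $\gamma$ from $\trap$ to $q$. Since $q \in I^+(\trap)$, there is a point $p \in \trap$ with $q \in I^+(p)$, so some competitor has positive length and $\tau(\trap,q) > 0$. The aim of the existence step is to produce a causal curve that attains this supremum.

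The key use of global hyperbolicity is that, because $\trap$ is compact, the set $J^+(\trap) \cap J^-(q)$ is compact. I would take a maximizing sequence $\gamma_i$ of future causal curves from $\trap$ to $q$ with $L_g(\gamma_i) \to \tau(\trap,q)$, each contained in this compact set. By the limit curve theorem a subsequence converges (after reparametrization) to a causal limit curve $\sigma$ from $\trap$ to $q$, and upper semicontinuity of $L_g$ under this convergence gives $L_g(\sigma) \ge \tau(\trap,q)$, so $\sigma$ is maximizing. A maximizing causal curve can have no corners, since rounding a corner would strictly increase its length, and between timelike-related points it must itself be timelike; hence $\sigma$ is, up to reparametrization, a timelike geodesic, which I reparametrize to unit speed.

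It remains to show that $\sigma$ is normal to $\trap$. Here I would apply the first variation formula for Lorentzian arc length to variations of $\sigma$ that move the initial endpoint freely along $\trap$ while keeping the terminal endpoint fixed at $q$. Because $\sigma$ maximizes among all such competitors, the first variation vanishes for every admissible variation field $V$; as $\sigma$ is already a geodesic the interior term drops out, and since $q$ is fixed the terminal boundary term vanishes, leaving only a term proportional to $g(\dot\sigma(0),V)$ with $V \in T_p\trap$ arbitrary. Vanishing for all such $V$ forces $\dot\sigma(0) \perp T_p\trap$, i.e.\ $\sigma$ is normal to $\trap$.

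The main obstacle is the existence step: one must carefully verify the compactness of $J^+(\trap) \cap J^-(q)$ and the upper semicontinuity of $L_g$ along the limit-curve convergence, which together amount to the Avez--Seifert theorem adapted to a compact initial submanifold in place of a point. The normality step, by contrast, is a routine application of the first variation formula once existence is in hand.
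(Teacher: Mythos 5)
Your proof is correct, but it reaches the existence statement by a different decomposition than the paper. The paper works in two modular steps, each an off-the-shelf citation from O'Neill: first, global hyperbolicity makes the time separation $\tau(\cdot,q)$ continuous (Lem.~14.21), so it attains its maximum on the compact set $\trap$ at some point $p$; second, the point-to-point Avez--Seifert result (Lem.~14.19) supplies a maximal geodesic from that optimal $p$ to $q$; normality is then quoted as standard (Cor.~10.26). You instead run a single limit-curve argument over causal curves with \emph{variable} initial point on $\trap$: compactness of $J^+(\trap)\cap J^-(q)$, a maximizing sequence, the limit curve theorem, and upper semicontinuity of $L_g$ --- in effect re-proving the Avez--Seifert theorem adapted to a compact source, as you yourself note, rather than reducing to the point version via continuity of $\tau$. (One small point to make explicit: you should pass to a subsequence so that the initial points $p_i\in\trap$ converge, which is where compactness of $\trap$ enters your version; in the paper that same compactness enters instead when maximizing the continuous function $\tau(\cdot,q)$.) Your normality step via the first variation formula, with the interior term killed by the geodesic equation and the terminal boundary term by the fixed endpoint, and with two-sidedness of variations along $\trap$ forcing $g(\dot\sigma(0),V)=0$ for all $V\in T_p\trap$, is exactly the content of the corollary the paper cites. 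The trade-off: the paper's route is shorter and fully modular, leaning on standard lemmas; yours is self-contained and makes transparent precisely where global hyperbolicity (compactness of causal diamonds, upper semicontinuity of length) and compactness of $\trap$ are used, at the cost of having to verify the limit-curve machinery that the cited lemmas encapsulate.
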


\begin{proof}
 Because $(M,g)$ is globally hyperbolic, the Lorentzian distance from $q$ (to the past),
 \begin{equation}
  \tau(x,q) := \sup \left\{ L_g(\sigma) \mid \sigma \text{ causal from } x \text{ to } q \right\},
 \end{equation}
is continuous \cite[Lem.~14.21]{ONeill}. Thus $\tau(\cdot,q)$ attains a maximum on the compact set $\trap$. For every point $p \in \trap$ where said maximum is attained, again by global hyperbolicity there exists a geodesic $\sigma$ from $p$ to $q$ with $L_g(\sigma) = \tau(p,q)$ \cite[Lem.~14.19]{ONeill}. That $\sigma$ must be normal to $\trap$ is standard \cite[Cor.~10.26]{ONeill}.
\end{proof}

Our goal is to show that the length of all maximizing geodesics provided by Lemma \ref{lem:global} is smaller than $\ell$, from which the statement of Theorem \ref{thm:main} follows immediately, since all other curves must be shorter than the maximizing one (here $\ell := \frac{1}{k}+\frac{n}{H_0-k}$ as in the theorem). It suffices to show that if there was a normal geodesic $\sigma \colon [0,\ell] \to M$ of length $\ell$, it would have a focal point $\sigma(r)$ for $r< \ell$ (defined below), and hence would not be maximizing.

Let $\sigma \colon [0,\ell] \to M$ be a unit-speed geodesic normal to $\trap$. We need to consider infinitessimal variations of $\sigma$ that remain normal to $\trap$. To each such variation corresponds a \emph{variation vector field} $V \colon [0,\ell] \to TM$ along $\sigma$ such that $V(0)$ is tangent to $\trap$ and $V(\ell)=0$, and visceversa any such variation vector field gives rise to an infinitessimal variation of $\sigma$. The intuition is that we are pushing $\sigma$ along $V$, while keeping the starting direction normal to $\trap$ and the endpoint fixed. Because $\sigma$ is a normal geodesic, the first derivative of the length for these variations vanishes. The second variation $L''_V$ of the length is then given by Synge's formula
\begin{align} \label{eq:secondvar}
   L''_V = &-\int_0^\ell \left( g(\dot V, \dot V) - \Riem(\dot\sigma,V,\dot\sigma,V) \right) du \nonumber \\ &+ g(\dot\sigma(0),II(V(0),V(0))),
\end{align}
where dots mean differentiation along $\sigma$ and $II$ is the second fundamental form of $\trap$ (see \cite[Chap.~10]{ONeill}). A focal point of $\trap$ along $\sigma$ is a point $\sigma(r)$ such that there exists a variation vector field $V$ with $V(r) = 0$ and $L''_V \geq 0$. When there is a focal point $\sigma(r)$, then $\sigma$ no longer maximizes the distance to $\trap$ after $\sigma(r)$.

\begin{lem}
 Every unit-speed timelike geodesic $\sigma \colon [0,\ell] \to M$ normal to $\trap$ encounters a focal point $\sigma(r)$ with $0 < r \leq \ell$, provided $\sigma$ is defined on $[0,\ell]$.
\end{lem}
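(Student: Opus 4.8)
The plan is to build $n$ admissible variation vector fields whose second variations sum to a nonnegative number; since a finite sum is nonnegative only if some summand is, this produces a single field $V$ with $V(\ell)=0$ and $L''_V\geq 0$, exhibiting $\sigma(\ell)$ as the desired focal point. First I would fix a parallel orthonormal frame $E_1,\dots,E_{n-1},U,\dot\sigma$ along $\sigma$, where $E_1(0),\dots,E_{n-1}(0)$ is an orthonormal basis of $T_{\sigma(0)}\trap$ and $U$ is the parallel radial field from assumption (iv). For profile functions $f_1,f_2$ (to be chosen) with $f_1(\ell)=f_2(\ell)=0$ and $f_2(0)=0$, I set $V_j:=f_1 E_j$ for $j\leq n-1$ and $V_n:=f_2 U$. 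Each is admissible, since $V_j(0)$ is tangent to $\trap$ while $V_n(0)=0$, and all vanish at $u=\ell$.

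Next I would insert these fields into Synge's formula (\ref{eq:secondvar}) and sum. Using $\sum_{j=1}^{n-1}II(E_j(0),E_j(0))=H$ together with the identity $\sum_{j=1}^{n-1}\Riem(\dot\sigma,E_j,\dot\sigma,E_j)=\Ric(\dot\sigma,\dot\sigma)-\Riem(\dot\sigma,U,\dot\sigma,U)$ (the trace over the spacelike frame orthogonal to $\dot\sigma$), the sum collapses to
\begin{align*}
 \sum_{j=1}^n L''_{V_j} ={}& -\int_0^\ell \Big( (n-1)(f_1')^2 + (f_2')^2 - f_1^2\,\Ric(\dot\sigma,\dot\sigma) \\
 &+ (f_1^2 - f_2^2)\,\Riem(\dot\sigma,U,\dot\sigma,U) \Big)\, du \\
 &+ f_1(0)^2\, g(\dot\sigma(0),H).
\end{align*}
The reverse Cauchy--Schwarz inequality applied to the future-pointing unit $\dot\sigma(0)$ and the past-pointing $H$ gives $g(\dot\sigma(0),H)\geq H_0$, and the timelike convergence condition (ii) makes the $\Ric$-term contribute with a favorable sign.

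The crux is choosing $f_1,f_2$ so the remaining two terms cooperate. On $[0,\tfrac1k]$ I would take $f_1\equiv 1$ and $f_2(u)=ku$, so that $f_1^2-f_2^2=1-k^2u^2$ \emph{exactly} reproduces the weight in (\ref{eq:initialcond}); assumption (iv) then forces $-\int_0^{1/k}(f_1^2-f_2^2)\,\Riem(\dot\sigma,U,\dot\sigma,U)\,du\geq 0$. On $[\tfrac1k,\ell]$ I would set $f_1=f_2$ equal to the linear function dropping from $1$ to $0$, which simultaneously annihilates the radial-curvature term and minimizes the kinetic integral. A direct computation gives $\int_0^\ell\big((n-1)(f_1')^2+(f_2')^2\big)\,du = k+\tfrac{n}{\ell-1/k}$, which equals $H_0$ precisely when $\ell=\tfrac1k+\tfrac{n}{H_0-k}$, so the kinetic term contributes exactly $-H_0$ and cancels the mean-curvature term. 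Collecting signs leaves $\sum_j L''_{V_j}\geq 0$, as required.

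The main obstacle, and the part requiring genuine care, is the bookkeeping of signs in Synge's formula together with the time-orientation of $H$, and the check that the piecewise-defined $f_1,f_2$ are continuous at $u=\tfrac1k$ (they are, since both profiles equal $1$ there), so that no corner terms corrupt the index form. The value of $\ell$ is not guessed but reverse-engineered: it is the smallest length for which the linear tail's kinetic cost $\tfrac{n}{\ell-1/k}$ falls to $H_0-k$, exactly matching the budget left over by the trapped-surface term, which is why the constant in the statement is sharp for this method.
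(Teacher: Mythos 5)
Your proposal is correct and takes essentially the same route as the paper: up to relabeling, your $f_1,f_2$ are exactly the paper's $\psi,\varphi$, and the variation fields $\psi E_j$ (tangential) and $\varphi U$ (radial), the trace identity splitting off $\Riem(\dot\sigma,U,\dot\sigma,U)$, the reverse Cauchy--Schwarz step $g(\dot\sigma(0),H)\geq H_0$, and the reverse-engineered value $\ell=\frac{1}{k}+\frac{n}{H_0-k}$ all coincide. The only (harmless) difference is a technicality: you keep the piecewise-linear profiles and appeal to piecewise-smooth index form theory, whereas the paper smooths $\varphi,\psi$ in $C^1$ and uses strictness of the final inequality to absorb the approximation error.
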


\begin{proof}
 Fix any point $p \in \trap$. Let $e_1=U$ and let $e_{2},...,e_{n}$ be an orthonormal basis for $T_p \trap$. Parallel transport $e_1,...,e_{n}$ along $\sigma$, obtaining vector fields $E_1,...,E_{n}$. In particular, $E_1 = U$ (cf.\ assumption (iii)). For $u \in [0,\ell]$ the affine parameter of $\sigma$, and $k$ as in assumption (iv), define
 \begin{align}
  \varphi(u) &:= \begin{cases} k u &\text{for } u < 1/k, \\ 1-\left(u-\frac{1}{k}\right)\frac{H_0-k}{n} &\text{for } u \geq 1/k,\end{cases}
  \\
  \psi(u) &:= \begin{cases} 1 &\text{for } u < 1/k, \\ 1-\left(u-\frac{1}{k}\right)\frac{H_0-k}{n} &\text{for } u \geq 1/k.\end{cases}
 \end{align}
Note that $\psi(0) =1$ while $\varphi(0) = \varphi(\ell)=\psi(\ell)=0$. We use these as coefficients to define $n$ variation vector fields in the following way:
\begin{align}
 &V_1 := \varphi E_1, &V_j := \psi E_j \quad &\text{for} \quad j = 2,...,n.
\end{align}
Note here that $\varphi, \psi$ are not smooth, but we can smoothly approximate them (more to that below). By \eqref{eq:secondvar} we have
\begin{align}
 L''_{V_1} = &- \int_0^{\ell} \left((\varphi')^2 - \varphi^2 \Riem(E_1, \dot\sigma, E_1, \dot\sigma) \right) du \\
 = &-k - \frac{H_0-k}{n} + \int_0^\ell \psi^2 \Riem(E_1, \dot\sigma, E_1, \dot\sigma) du \nonumber \\ &- \int_0^\ell \left(\psi^2 - \varphi^2\right) \Riem(E_1, \dot\sigma, E_1, \dot\sigma) du \\
 = &-k - \frac{H_0-k}{n} + \int_0^\ell \psi^2 \Riem(E_1, \dot\sigma, E_1, \dot\sigma) du \nonumber \\ &- \int_0^{\frac{1}{k}} \left( 1-k^2 u ^2\right) \Riem(E_1, \dot\sigma, E_1, \dot\sigma) du
\end{align}
and for $j = 2,...,n$,
\begin{align}
 L''_{V_j} = &- \int_0^{\ell} \left((\psi')^2 - \psi^2 \Riem(E_j, \dot\sigma, E_j, \dot\sigma) \right) du \nonumber \\ &+ g \left( \dot\sigma(0), II(e_j,e_j) \right) \\
 = &- \frac{H_0-k}{n} + \int_0^{\ell} \psi^2 \Riem(E_j, \dot\sigma, E_j, \dot\sigma) du  \nonumber \\ &+ g \left( \dot\sigma(0), II(e_j,e_j) \right)
\end{align}
Summing these two expressions, and using that
\begin{align}
 \Ric(\dot\sigma,\dot\sigma) &= \sum_{i=1}^n \Riem(E_i,\dot\sigma,E_i,\dot\sigma), \label{eq:trace} \\ &H = \sum_{j = 2}^n II(e_j,e_j),
\end{align}
and assumption (ii)--(iv), we obtain
\begin{align}
 \sum_{i = 1}^{n} L''_{V_i} = &-(H_0-k) + \int_0^\ell \psi^2 \Ric(\dot\sigma, \dot\sigma) du  \nonumber \\ &+ g(\dot\sigma(0),H) - k  \nonumber \\ &- \int_0^{\frac{1}{k}} \left( 1-k^2 u ^2\right) \Riem(E_1, \dot\sigma, E_1, \dot\sigma) du \\
 > &-H_0 + g(\dot\sigma(0),H) \geq 0,
\end{align}
where the last inequality follows from the reverse Cauchy--Schwartz inequality \cite[Prop.~5.30]{ONeill} together with the fact that $H, \sigma(0)$ are past- and future-directed timelike respectively. We conclude that at least one of the $L''_{V_i} $ must be positive, and hence $\sigma$ has a focal point, as desired. Since the inequality is strict, the argument also works for smooth approximations of $\varphi,\psi$ close enough in the $C^1$-topology on the space of continuous functions $[0,\ell]$ (this is just a technicality needed because the theory developed in \cite{ONeill} assumes smoothness).
\end{proof}

\section{Discussion}

At the expense of strengthening the assumptions, we have been able to improve on the conclusions of the Penrose singularity theorem in two important aspects:
\begin{itemize}
    \item We conclude timelike incompleteness, which (unlike null incompleteness) has a clear physical interpretation in terms of the proper time experienced by observers.
    \item We can further conclude the existence of an event horizon, since once an observer reaches a trapped surface, they cannot escape anymore and are doomed to fall into the singularity after a proper time of at most $\ell$. This is true even for accelerated observers.
\end{itemize}
Here by horizon we simply mean the boundary that separates the regions $\mathcal{B}$ and $\mathcal{E}$ defined as follows. We call \emph{black hole interior} $\mathcal{B}$ the region such that every future-directed timelike curve starting in $\mathcal{B}$ is incomplete, while by \emph{black hole exterior} $\mathcal{E} := M \setminus \mathcal{B}$ we mean its complement, implying that from every point in $\mathcal{E}$, one can find at least one future-directed timelike curve of infinite length (i.e.\ an observer that lives forever). Under the assumptions of our theorem, $\mathcal{B} \neq \emptyset$, and $\mathcal{E}$ should also be assumed non-empty, unless we are in a Big Crunch scenario where the whole Universe is swallowed by the singularity. Incidentally, Senovilla \cite{SenUM} has argued that in certain situations, gravitational collapse \emph{can} lead to a Big Crunch, but we shall ignore this in the present discussion. The general treatment of event horizons as described here was pioneered by M\"uller \cite{MueHor}, and has the advantages over the traditional approach using null infinity that we do not need to assume asymptotic flatness, or to worry about making a choice of spacetime compactification.

A word of caution about the the horizon predicted by our theorem: It is in principle possible that there exist some timelike geodesics (or, more generally, curves with bounded acceleration) which are incomplete despite not emanating from a trapped surface. Our theorem gives us no control over those. In particular, they might never cross the horizon as we have defined it (meaning they could stay in $\mathcal{E}$ and never enter $\mathcal{B}$). Thus while the singularities \emph{which our theorem predicts} are hidden behind a horizon, there could be other singularities in our spacetime (or the predicted singularity could be ``larger''). Recall that Penrose's weak cosmic censorship conjecture states states that \emph{every} singularity in a physically reasonable spacetime should be hidden behind an event horizon.

To conclude the paper, let us mention that many variations of the singularity theorems of Hawking and Penrose are known. These have relaxed assumptions, particularly the causality assumptions \cite{LesSing,MinSing}, the energy conditions \cite{GKOS,PaeSing,FeKoSing}, the regularity of the spacetime metric \cite{GraSing,KOSS}, and the dimensionality of the trapped submanifold \cite{GaSeSing} (this list of references is in no way exhaustive). The goal is often to accommodate for violations in the classical assumptions due to quantum phenomena. Not only should many of these modifications be possible for our theorem, but in fact it paves the way for further progress. On the one hand, our theorem is in the spirit of Penrose in that it is about gravitational collapse. On the other hand, our proof is more similar to that of the Hawking singularity theorem, which has generally proven to be easier to modify. Thus the chance arises to adapt improved variants of Hawking's theorem to the context of gravitational collapse.

We have, however, also created the need to relax our new assumption (iv) (which has no analogue in the original singularity theorems of Hawking and Penrose). From our proof it is already possible to derive some sufficient conditions that are technically weaker than (iv), but they make the statement of the theorem more convoluted without improving on its physical interpretation (for example, we can allow $k$ to depend on the geodesic, as long as it remains uniformly bounded away from $H_0$). More interesting (at least from a mathematical point of view) is to substitute (ii) and (iv) together by an intermediate curvature bound. By this we mean a bound on the trace of the Riemann tensor over a $n-1$ dimensional subspace (instead of the full $n$ dimensions that yield the Ricci tensor, cf.\ \eqref{eq:trace}). Then we can replicate our proof but completely ignoring the ``radial'' direction $E_1$. But while the Ricci bound $\Ric(X,X) \geq 0$ is equivalent to the strong energy condition via the Einstein Equations, this is not true for intermediate curvature bounds, leaving their physical meaning in the air. Note that such an approach was taken by Galloway and Senovilla \cite{GaSeSing}, but there the directions left out are to be interpreted as compactified string theory dimensions, which should not contribute to macroscopic physics anyhow, while in our case leaving out the radial direction seems artificial. Satisfactorily weakening assumption (iv) thus remains as an open point for further investigation.

\bigskip
The author thanks Melanie Graf for useful discussions.

\printbibliography

\end{document}